\documentclass[journal,twoside,web]{ieeecolor}

\usepackage{lcsys}
\usepackage{cite}
\usepackage{amsmath,amssymb,amsfonts, mathtools}
\usepackage{algorithmic}
\usepackage{graphicx}
\usepackage{textcomp}
\usepackage{bm}
\usepackage{ntheorem}

\newcommand{\E}{\mathbb{E}}
\newcommand{\hv}{\textsc{Vec}}
\renewcommand{\xi}{\psi}
\renewcommand{\omega}{\psi}
\renewcommand{\Omega}{\Psi}
\renewcommand{\rho}{\theta}
\renewcommand{\intercal}{\top}
\newcommand{\mathds}{\bm}
\newcommand{\reply}[1]{\textcolor{blue}{#1}}

\newtheorem{prop}{Proposition}
\newtheorem{lemma}{Lemma}
\newtheorem{cor}{Corollary}
\newtheorem{thm}{Theorem}
\newtheorem{definition}{Definition}

\usepackage{letltxmacro}
\LetLtxMacro\orgvdots\vdots
\LetLtxMacro\orgddots\ddots

\makeatletter
\DeclareRobustCommand\vdots{%
	\mathpalette\@vdots{}%
}
\newcommand*{\@vdots}[2]{%
	\sbox0{$#1\cdotp\cdotp\cdotp\m@th$}%
	\sbox2{$#1.\m@th$}%
	\vbox{%
		\dimen@=\wd0 %
		\advance\dimen@ -3\ht2 %
		\kern.5\dimen@
		\dimen@=\wd2 %
		\advance\dimen@ -\ht2 %
		\dimen2=\wd0 %
		\advance\dimen2 -\dimen@
		\vbox to \dimen2{%
			\offinterlineskip
			\copy2 \vfill\copy2 \vfill\copy2 %
		}%
	}%
}
\DeclareRobustCommand\ddots{%
	\mathinner{%
		\mathpalette\@ddots{}%
		\mkern\thinmuskip
	}%
}
\newcommand*{\@ddots}[2]{%
	\sbox0{$#1\cdotp\cdotp\cdotp\m@th$}%
	\sbox2{$#1.\m@th$}%
	\vbox{%
		\dimen@=\wd0 %
		\advance\dimen@ -3\ht2 %
		\kern.5\dimen@
		\dimen@=\wd2 %
		\advance\dimen@ -\ht2 %
		\dimen2=\wd0 %
		\advance\dimen2 -\dimen@
		\vbox to \dimen2{%
			\offinterlineskip
			\hbox{$#1\mathpunct{.}\m@th$}%
			\vfill
			\hbox{$#1\mathpunct{\kern\wd2}\mathpunct{.}\m@th$}%
			\vfill
			\hbox{$#1\mathpunct{\kern\wd2}\mathpunct{\kern\wd2}\mathpunct{.}\m@th$}%
		}%
	}%
}

\def\BibTeX{{\rm B\kern-.05em{\sc i\kern-.025em b}\kern-.08em
    T\kern-.1667em\lower.7ex\hbox{E}\kern-.125emX}}
\markboth{\journalname, VOL. XX, NO. XX, XXXX}
{Martinelli \MakeLowercase{\textit{et al.}}: Data-Driven Optimal Control of Affine Systems: A Linear Programming Perspective}

\pagestyle{empty}

\begin{document}
\title{Data-Driven Optimal Control of Affine Systems: \\ A Linear Programming Perspective}
\author{Andrea Martinelli, \IEEEmembership{Graduate Student Member, IEEE}, Matilde Gargiani, Marina Draskovic, \\ and John Lygeros, \IEEEmembership{Fellow, IEEE}
\thanks{Research supported by the European Research Council under the Horizon 2020 Advanced Grant No. 787845 (OCAL).}
\thanks{The authors are with the Automatic Control Laboratory, Swiss Federal Institute of Technology (ETH) Zurich, 8092 Zurich, Switzerland. (e-mail: andremar@ethz.ch, gmatilde@ethz.ch, mdraskovic@ethz.ch, lygeros@ethz.ch). 
}}

\maketitle
\thispagestyle{empty}

\begin{abstract}
In this letter, we discuss the problem of optimal control for affine systems in the context of data-driven linear programming. First, we introduce a unified framework for the fixed point characterization of the value function, \textit{Q}-function and relaxed Bellman operators. Then, in a model-free setting, we show how to synthesize and estimate Bellman inequalities from a small but sufficiently rich dataset. To guarantee exploration richness, we complete the extension of Willems' fundamental lemma to affine systems.
\end{abstract}

\begin{IEEEkeywords}
Approximate dynamic programming, data-driven control, affine dynamical systems
\end{IEEEkeywords}

\section{Introduction}
\label{sec:introduction}

\IEEEPARstart{T}{he} linear programming (LP) approach to optimal control problems was initially developed by A.S. Manne in the 1960s \cite{ManneLP}, following the well-known studies conducted by R. Bellman in the 1950s \cite{BellmanDP}. The problem of deriving the fixed point of the Bellman operator can be cast as an LP by exploiting monotonicity and contractivity properties \cite{BertsekasAbstractDP}. An advantage of the LP formulation is that there exist efficient and fast algorithms to tackle such programs \cite{BoydConvexOptimization}. On the other hand, similarly to the classic dynamic programming approach introduced by Bellman, the LP approach suffers from poor scalability properties often referred to as \textit{curse of dimensionality} \cite{BertsekasNDP}. The sources of intractability for systems with continuous state and action spaces include the optimization variables in infinite dimensional spaces and infinite number of constraints. For this reason, the infinite dimensional LPs are usually approximated by tractable finite dimensional ones \cite{LasserreDTMCP,deFariasLPapproach,EsfahaniFromInftoFinitePrograms,PaulADP, LasserrePolynomials}.

In recent years, the LP approach has experienced an increasing interest, especially in combination with model-free control techniques \cite{SutterCDC2017,GoranADP,AlexandrosIFAC20,RelaxedBellmanOp}. In such a setting, one assumes the dynamical system to be unknown but observable via state-input trajectories, and builds one constraint (here called Bellman inequality) of the LP for each observed transition. In this way, one can bypass the classic system identification step and at the same time tackle a source of intractability by solving an LP with finite constraints. Empirical evidence suggests that the solution quality may dramatically depend on the number of sampled constraints \cite{RelaxedBellmanOp}. To avoid massive exploration, one can attempt to generate additional constraints offline from a small but sufficiently rich dataset. A preliminary analysis is conducted in \cite{AndreaSynthesisBellmanIneq} for linear systems in the value function formulation. Another fundamental problem is to estimate the expectation in the Bellman inequalities. A typical approach, \textit{e.g.} in \cite{SutterCDC2017} and \cite{RelaxedBellmanOp}, is to reinitialize the dynamics in the same state-input pairs and compute a Monte-Carlo estimate, even though this procedure could be unrealistic in stochastic settings.

Motivated by the poor scalability often affecting the LP approach and inspired by the recent literature revolving around Willems' fundamental lemma and data-driven control of affine systems, in the present work we discuss a unified framework to study data-driven control problems arising from this problem class. The authors in \cite{BoydGenQuadratics} show that an augmented state-space formulation allows one to tackle different affine stochastic control problems. The fundamental lemma \cite{WillemsPersistence} states that, for controllable \textit{linear} systems, persistency of excitation is a sufficient condition on the control signal to generate a trajectory that contains enough information to express any other trajectory of appropriate length as a linear combination of the input-output data. This result lies at the heart of many recent works on data-driven control of linear systems \cite{JeremyDeePC}, \cite{TesiDatadrivencontrol}, \cite{vanWaardeDataInformativity}. Extending the fundamental lemma to affine systems is not trivial, since the collected trajectories no longer form a linear subspace. To do so, we complement the initial result in \cite{BerberichFundLemmaAffineSystems} by a persistency of excitation argument, inspired by the state-space proof of the fundamental lemma described in \cite{HenkMultipleDatasets}.

\noindent Our main contributions can be summarised as follows: 

$\bullet \quad \mbox{In}$ Section \ref{sec:opt_control}, we introduce the stochastic optimal control framework for affine systems including the characterization of the fixed points corresponding to the value function, $Q$-function and relaxed Bellman operators;

$\bullet \quad \mbox{In}$ Section \ref{sec:Bellman_ineq}, we show how the Bellman inequalities can be reconstructed starting from a sufficiently rich dataset. Moreover, we provide estimators for the corresponding expectations that do not need reinitialization and show how to build LPs that preserve the optimal policy; 

$\bullet \quad \mbox{To}$ ensure the dataset is sufficiently rich, in Section \ref{sec:WillemFL} we extend Willems' fundamental lemma to affine systems by showing that controllability and persistency of excitation are still sufficient conditions to generate trajectories containing enough information. \smallskip

\noindent \textbf{Notation.} We denote with $\textsc{Tr}(\cdot)$, $\textsc{Spec}(\cdot)$, $\textsc{Rowsp}(\cdot)$, $\textsc{Vec}(\cdot)$ and $\bm{1}$ the trace, spectrum, row space and vectorization of a real matrix and a vector of ones of appropriate dimension. For a subset $\mathbb{Y}$ of a finite dimensional vector space, we denote with $\mathcal{S}(\mathbb{Y})$ the vector space of all real-valued measurable functions $g : \mathbb{Y} \rightarrow \mathbb{R}$ that have a finite weighted sup-norm \cite[\S 2]{BertsekasAbstractDP}, that is, $\|g(y)\|_{\infty,z} = \textstyle\sup_{y\in\mathbb{Y}}\frac{|g(y)|}{z(y)} < \infty$ with $z : \mathbb{Y} \rightarrow \mathbb{R}_{>0}$.

\section{Optimal Control of Affine Systems with Generalized Quadratic Stage-Cost}\label{sec:opt_control}

In this section, we introduce the stochastic optimal control problem for affine systems. We characterize the fixed point of the value function, $Q$-function and relaxed Bellman operators, discussing how their contraction properties allow one to express the fixed points via infinite-dimensional LPs. 

\subsection{Stochastic Optimal Control}\label{sec:stochoptcontrol}
Consider the following discrete-time affine dynamics,
\begin{equation}\label{system}
	x^{+} = f(x,u,\xi) =  Ax + Bu + c + \xi,
\end{equation}
with possibly infinite state and action spaces $x\in\mathbb{\mathbb{X}} \subseteq \mathbb{R}^{n}$ and $u \in \mathbb{U} \subseteq \mathbb{R}^{m}$. Here $\omega \in \mathbb{D} \subseteq \mathbb{R}^{n}$ denotes a random vector with possibly non-zero mean $\mu$ and covariance $\Sigma\succeq 0$. Moreover, $A\in\mathbb{R}^{n\times n}$ and $B \in 
\mathbb{R}^{n\times m}$ are such that $(\sqrt{\gamma}A, \sqrt{\gamma}B)$ is stabilizable, $c \in 
\mathbb{R}^{n}$ is a constant term and $\gamma \in (0,1)$ is a discount factor. We define a generalized quadratic stage-cost $\ell : \mathbb{X}\times\mathbb{U} \rightarrow \mathbb{R}_{\ge 0}$ as
\begin{equation}\label{stagecost}
	\ell(x,u) = \begin{bmatrix} x \\ u 	\end{bmatrix}^{\intercal} \underbrace{\begin{bmatrix} L_{xx} & L_{xu} \\ \star & L_{uu} 
	\end{bmatrix}}_{L} \begin{bmatrix} x \\ u \end{bmatrix} + 2\begin{bmatrix} x \\ u 	\end{bmatrix}^{\intercal} \underbrace{\begin{bmatrix} L_{x} \\ L_{u} \end{bmatrix}}_{L_{\ell}} + L_{c},
\end{equation}
where the symbol $\star$ is used to denote symmetry.
Consider the $\gamma$-discounted infinite-horizon cost associated to a deterministic stationary feedback policy $\pi : \mathbb{X} \rightarrow \mathbb{U}$,  
\begin{equation}\label{value function}
	v_{\pi}(x) = \mathbb{E} \left[ \sum_{k=0}^{\infty} \gamma^k \ell(x_k, \pi(x_k)) \; \Big| \; x_0=x \right].
\end{equation}
The objective of the optimal control problem is to find an optimal policy $\pi^*$ such that $v_{\pi^*} = \inf_{\pi}v_{\pi} = v^*$. Throughout the paper, to ensure that $v^* \in \mathcal{S}(\mathbb{X})$, $\pi^*$ is measurable and the infimum of $v_{\pi}$ is attained, we assume the stage-cost to be lower semi-continuous, nonnegative and inf-compact on $\mathbb{X} \times \mathbb{U}$, and that there exists a policy $\pi$ such that $v_\pi(x) < \infty$ for each $x \in \mathbb{X}$  \cite[Assumptions 4.2.1 and 4.2.2]{LasserreDTMCP}.

\subsection{Value Function Formulation}

The optimal value function $v^*$ is generally difficult to compute since, among other issues, it involves the minimization of an infinite sum of costs. However, it allows for the following recursive definition \cite{BellmanDP},
\begin{align}\label{Bellman operator for v functions}
	v^{*}(x) & = \inf_{u\in \mathbb{U}} \bigl\{ \ell(x,u) + \gamma \E\big[v^{*}(f(x,u,\xi))\big] \bigr\} \notag \\
	& = (\mathcal{T}v^{*})(x), 
\end{align}
where $\mathcal{T} : \mathcal{S}(\mathbb{X}) \rightarrow \mathcal{S}(\mathbb{X})$ is known as the \textit{Bellman operator}. $\mathcal{T}$ is a monotone, $\gamma$-contractive operator whose unique fixed point is $v^*$ \cite{DenardoContractionMaps, BertsekasAbstractDP}. 

When the dynamics is linear and the stage-cost quadratic, the resulting optimal control problem (LQR) enjoys well-known closed-form solutions \cite{DavisStochasticControl} based on the algebraic Riccati equations (ARE) $P^* = q^* - q^*_\ell q_c^{*-1} q^{*\intercal}_\ell$, where
\begin{equation*}\label{Q matrix}
	Q^*=\begin{bmatrix} q^* & q^*_\ell \\ \star & q^*_c
	\end{bmatrix} = \begin{bmatrix} L_{xx} + \gamma A^\intercal P^*A & L_{xu} + \gamma A^\intercal P^*B \\ \star & L_{uu} + \gamma B^\intercal P^*B
	\end{bmatrix}.
\end{equation*}
In case of affine systems, by suitably augmenting the system's coordinates, it is possible to show that the optimal policy also has an (affine) closed-form \cite{BoydGenQuadratics}. The next result characterizes $v^*$ and $\pi^*$ for affine systems and generalized quadratic stage-cost, by introducing notation and methods that will be reused in the extensions to $Q$-function and relaxed Bellman operator.
\begin{prop}\label{Proposition V-LQR}
	The fixed point of \eqref{Bellman operator for v functions} under dynamics \eqref{system}, stage-cost \eqref{stagecost} and $(\mathbb{X}, \mathbb{U}, \mathbb{D}) = (\mathbb{R}^n, \mathbb{R}^m, \mathbb{R}^n)$ is
	\begin{equation}\label{v star}
		v^*(x) = x^\intercal P^* x + 2x^\intercal P_{\ell}^* + P_{c}^* + \tfrac{\gamma \textsc{Tr}(P^*\Sigma)}{1-\gamma}.
	\end{equation}
	$\tilde{P}^* = \big[ \begin{smallmatrix} P^* & P_\ell^* \\ \star & P_c^*
	\end{smallmatrix} \big]$ is the unique positive definite solution to the following \textit{augmented} ARE
	\begin{equation}\label{augmented ARE}
		\tilde{P}^* = \tilde{q}^* - \tilde{q}^*_\ell \tilde{q}_c^{*-1} \tilde{q}_\ell^{*\intercal}, 
	\end{equation}
	\begin{equation}\label{augmented Q matrix}
		\begin{bmatrix} \tilde{q}^* & \tilde{q}^*_\ell \\ \star & \tilde{q}^*_c
		\end{bmatrix} = \begin{bmatrix} \tilde{L}_{xx} + \gamma \tilde{A}^\intercal \tilde{P}\tilde{A} & \tilde{L}_{xu} + \gamma \tilde{A}^\intercal \tilde{P}\tilde{B} \\ \star & L_{uu} + \gamma \tilde{B}^\intercal \tilde{P}\tilde{B}
		\end{bmatrix},
	\end{equation}
	$\tilde{A} = \big[ \begin{smallmatrix} A & c+\mu \\ 0 & 1 \end{smallmatrix} \big]$, $\tilde{B} = \big[ \begin{smallmatrix} B \\ 0 \end{smallmatrix} \big]$, $\tilde{L}_{xx} = \big[ \begin{smallmatrix} L_{xx} & L_x \\ \star & L_c
	\end{smallmatrix} \big]$, $\tilde{L}_{xu} = \big[ \begin{smallmatrix} L_{xu} \\ L_u^\intercal
	\end{smallmatrix} \big]$. \smallskip \\ 
	Finally, the associated optimal policy is
	\begin{equation}\label{optimal affine policy}
		\pi^*(x) = -q_{c}^{*-1} (q_{\ell}^{* \intercal} x + q),
	\end{equation}
	where $q = L_u + \gamma B^\intercal (P_\ell^* + P^*(c+\mu))$.

\end{prop}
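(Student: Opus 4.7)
The plan is to reduce the affine, generalized-quadratic problem to a standard zero-mean linear-quadratic stochastic regulator by lifting the state to $\tilde{x} = [x^\intercal, 1]^\intercal$, then invoking classical LQR results. Under this embedding, dynamics \eqref{system} read as the purely linear recursion $\tilde{x}^+ = \tilde{A}\tilde{x} + \tilde{B}u + \bar{\xi}$ with zero-mean noise $\bar{\xi} = [(\xi - \mu)^\intercal, 0]^\intercal$ and covariance $\bar{\Sigma} = \mathrm{diag}(\Sigma, 0)$. A direct block-matrix computation using the definitions of $\tilde{L}_{xx}$ and $\tilde{L}_{xu}$ shows that the stage-cost \eqref{stagecost} coincides with the purely quadratic form $\tilde{x}^\intercal \tilde{L}_{xx} \tilde{x} + 2\tilde{x}^\intercal \tilde{L}_{xu} u + u^\intercal L_{uu} u$, so the lifted problem is a textbook stochastic LQR in $\tilde{x}$.

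First I would verify stabilizability of the lifted pair $(\sqrt{\gamma}\tilde{A}, \sqrt{\gamma}\tilde{B})$: the block-triangular structure of $\tilde{A}$ contributes only the extra eigenvalue $\sqrt{\gamma} < 1$, already inside the unit disk, so stabilizability is inherited from $(\sqrt{\gamma}A, \sqrt{\gamma}B)$. Under this standing hypothesis the $\gamma$-discounted Bellman operator on $\tilde{x}$ has the unique fixed point $\tilde{V}^*(\tilde{x}) = \tilde{x}^\intercal \tilde{P}^* \tilde{x} + \tfrac{\gamma}{1-\gamma}\textsc{Tr}(\tilde{P}^*\bar{\Sigma})$, with $\tilde{P}^*$ the unique positive-definite solution of \eqref{augmented ARE}. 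The constant term is pinned down by plugging the quadratic-plus-constant ansatz into the Bellman recursion, using $\E[\bar{\xi}^\intercal \tilde{P}^*\bar{\xi}] = \textsc{Tr}(\tilde{P}^*\bar{\Sigma})$, and summing the resulting geometric series.

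It then remains to unfold the lifted expressions. Substituting $\tilde{x} = [x^\intercal, 1]^\intercal$ into $\tilde{V}^*$ and noting $\textsc{Tr}(\tilde{P}^*\bar{\Sigma}) = \textsc{Tr}(P^*\Sigma)$ yields \eqref{v star}. For the feedback, the lifted minimizer is $-\tilde{q}_c^{*-1}\tilde{q}_\ell^{*\intercal}\tilde{x}$; a partitioned expansion of $\tilde{q}_\ell^{*\intercal}\tilde{x}$ via the block structures of $\tilde{A}$, $\tilde{B}$ and $\tilde{P}^*$ identifies the state-dependent gain $q_\ell^{*\intercal}$ with its unaugmented counterpart, while the second block produces the affine offset $q = L_u + \gamma B^\intercal(P_\ell^* + P^*(c+\mu))$, recovering \eqref{optimal affine policy}. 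Meanwhile $\tilde{q}_c^* = q_c^*$ since the last row of $\tilde{B}$ is zero.

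The main subtlety I anticipate is justifying that the lifting is without loss of generality. The manifold $\{[x^\intercal, 1]^\intercal : x \in \mathbb{R}^n\}$ is forward-invariant because the last row of $\tilde{A}$ is $[0,1]$, the last row of $\tilde{B}$ is zero, and $\bar{\xi}$ has zero last component; therefore the restriction of the unconstrained LQR solution on $\mathbb{R}^{n+1}$ to this manifold recovers the original affine problem without spurious optimality loss. Positive definiteness and uniqueness of $\tilde{P}^*$, together with measurability of $\pi^*$, are then inherited from the classical LQR guarantees, completing the proof.
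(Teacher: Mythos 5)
Your proposal is correct and follows essentially the same route as the paper: augmenting the state with a constant coordinate $y\equiv 1$, centering the noise so the lifted problem becomes a standard zero-mean stochastic LQR, checking stabilizability of $(\sqrt{\gamma}\tilde{A},\sqrt{\gamma}\tilde{B})$ via the extra discounted eigenvalue, and unfolding $\tilde{P}^*$ and $-\tilde{q}_c^{*-1}\tilde{q}_\ell^{*\intercal}\tilde{x}$ back into the original coordinates. The forward-invariance remark you add is a nice explicit justification of a step the paper leaves implicit, but the argument is otherwise the same.
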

\begin{proof} 
	Let us consider the constant update $y^+ = y$ initialized at $y_0 = 1$ and the augmented dynamics
	\begin{equation}\label{augmented system}
		\tilde{x}^+ = \tilde{A}\tilde{x}+\tilde{B}u+\tilde{\xi},
	\end{equation}
	where $\tilde{x} = \big[ \begin{smallmatrix} x \\ y \end{smallmatrix} \big]$ and $\tilde{\xi} = \big[ \begin{smallmatrix} \xi - \mu \\ 0 \end{smallmatrix} \big]$. Then,  $\mathbb{E}[\tilde{\xi}] = \big[ \begin{smallmatrix} 0 \\ 0 \end{smallmatrix} \big]$ and $\mathbb{E}[\tilde{\xi}\tilde{\xi}^\intercal] = \big[ \begin{smallmatrix} \Sigma & 0 \\ 0 & 0
	\end{smallmatrix} \big] = \tilde{\Sigma} \succeq 0$. Since $y=1 \;\, \forall t$, the stage-cost can be represented in augmented coordinates as 
	\begin{equation}\label{augmented stage-cost}
		\tilde{\ell}(\tilde{x},u) 
		= \begin{bmatrix} \tilde{x} \\ u \end{bmatrix}^\intercal \begin{bmatrix} \tilde{L}_{xx} & \tilde{L}_{xu} \\ \star &  L_{uu} \end{bmatrix} \begin{bmatrix} \tilde{x}\\u \end{bmatrix} = \ell(x,u).
	\end{equation}
	Stabilizability of $(\sqrt{\gamma}\tilde{A}, \sqrt{\gamma}\tilde{B})$ follows from stabilizability of the original pair. Indeed, $\textsc{Spec}(\tilde{A}) = \textsc{Spec}(A) \cup \{1\}$, and asymptotic stability of the uncontrollable mode $y^+=y$ is guaranteed by the discount factor.
	Hence, we can formulate a classic LQR problem in augmented coordinates whose unique solution is given by
	\begin{equation}\label{augmented value function}
		\tilde{v}^*(\tilde{x}) = \tilde{x}^\intercal \tilde{P}^* \tilde{x} + \tfrac{\gamma \textsc{Tr}(\tilde{P}^*\tilde{\Sigma})}{1-\gamma} = v^*(x),
	\end{equation} 
	where $\tilde{P}$ is the solution to the augmented ARE \eqref{augmented ARE}, and the associated optimal policy is 
	\begin{equation*}
		\begin{aligned}
			\tilde{\pi}^*(\tilde{x}) & = \arg \min_{u\in\mathbb{R}^{m}} \big\{ \tilde{\ell}(\tilde{x},u) + \gamma \E\big[\tilde{v}^*(\tilde{A}\tilde{x}+\tilde{B}u+\tilde{\xi})\big] \big\}  \\
			& = -\tilde{q}_{c}^{*-1}\tilde{q}_{\ell}^{*\intercal} \tilde{x}.
		\end{aligned}
	\end{equation*}  
	The claim then follows by writing the solution in the original coordinates. 
\end{proof}
Note that if the system is linear $(c=0)$, the noise zero-mean $(\mu=0)$ and the stage-cost pure quadratic $(L_\ell=0)$, then $q=0$ and we recover the linear policy $\pi^*(x) = -q_{c}^{*-1} q_{\ell}^{* \intercal} x$.

The fixed point of $\mathcal{T}$ can be computed via linear programming \cite{deFariasLPapproach}. By observing that the Bellman inequality $v \le \mathcal{T}v$ implies $v \le v^{*}$, one can characterize the fixed point of $\mathcal{T}$ by looking for the greatest $v \in \mathcal{S}(\mathbb{X})$ that satisfies $v \le \mathcal{T}v$,
\begin{equation}\label{NPvaluefunction}
	\begin{aligned}
		\sup_{v\in\mathcal{S}(\mathbb{X})} & \int_{\mathbb{X}} v(x)c(dx) \\
		\mbox{s.t.} \;\; & v(x) \le (\mathcal{T}v)(x) \quad \forall x \in \mathbb{X},
	\end{aligned}
\end{equation}
where $c(\cdot)$ is a positive measure with finite moments. In the LP literature, $c(\cdot)$ is typically selected to be a probability measure \cite{PaulADP,deFariasLPapproach}. For example, if the state-space is unbounded one can use a Gaussian distribution, if it is compact a uniform distribution. Moreover, the measure $c(\cdot)$ can be used to give different weight in the quality of the approximation of the value function in different parts of the state-space. Now, notice that $\mathcal{T}$ is a nonlinear operator. However, it is possible to reformulate \eqref{NPvaluefunction} as an equivalent linear program \cite{deFariasLPapproach} by dropping the infimum in $\mathcal{T}$ and substituting the nonlinear constraint set with the following linear one
\[ v(x) \le \underbrace{\ell(x,u) + \gamma \E\big[v(f(x,u,\xi))\big]}_{(\mathcal{T}_\ell v)(x,u)} \;\, \forall (x,u) \in \mathbb{X} \times \mathbb{U}.  \] 
The associated optimal policy can then be computed by
\begin{equation}\label{greedy policy}
	\pi^*(x) = \arg \min_{u\in\mathbb{U}} \big\{ \ell(x,u) + \gamma \E\big[v^*(f(x,u,\xi))\big] \big\}.
\end{equation}

\subsection{\textit{Q}-function Formulation}

Policy extraction \eqref{greedy policy} is in general not possible if the dynamics $f$ or the stage-cost $\ell$ are not known. This difficulty can be overcome by introducing the Bellman operator associated to $Q$-functions \cite{WatkinsQLearning}, $\mathcal{F} : \mathcal{S}(\mathbb{X} \times \mathbb{U}) \rightarrow \mathcal{S}(\mathbb{X} \times \mathbb{U})$,
\begin{align}\label{Bellman operator q functions}
	q^{*}(x,u) & = \ell(x,u) + \gamma \E\left[\inf_{w\in \mathbb{U}}q^*(f(x,u,\xi),w)\right] \notag \\
	& = (\mathcal{F}q^{*})(x,u).
\end{align}
The advantage of the $Q$-function reformulation is that policy extraction is model-free:
\begin{equation}
	\pi^*(x) = \arg \min_{u\in\mathbb{U}}q^*(x,u).
\end{equation}
In the following, we characterize the fixed point of $\mathcal{F}$ under affine dynamics and generalized quadratic stage-cost.
\begin{prop}
	The fixed point of \eqref{Bellman operator q functions} under dynamics \eqref{system}, stage-cost \eqref{stagecost} and $(\mathbb{X}, \mathbb{U}, \mathbb{D}) = (\mathbb{R}^n, \mathbb{R}^m, \mathbb{R}^n)$ is
	\begin{equation}\label{optimal q function}
		q^*(x,u) = \big[ \begin{smallmatrix} x \\ u \end{smallmatrix}\big]^{\intercal} Q^* \big[ \begin{smallmatrix} x \\ u \end{smallmatrix}\big] + 2\big[ \begin{smallmatrix} x \\ u \end{smallmatrix}\big]^{\intercal}Q^*_{\ell} + Q^*_c + \tfrac{\gamma \textsc{Tr}(P^*\Sigma)}{1-\gamma}.
	\end{equation} 
	By considering $Q^* = \begin{bsmallmatrix} q^* & q^*_\ell \\ \star & q^*_c
	\end{bsmallmatrix}$, $Q^*_\ell = \begin{bsmallmatrix} q^*_{x} \\ q^*_{u}
	\end{bsmallmatrix}$ and \eqref{augmented Q matrix}, it holds
	\begin{equation}
		\left[\begin{array}{cc|c} q^* & q^*_{x} & q^*_{\ell} \\
			\star & Q^*_c & q^{*\intercal}_{u} \\ \hline
			\star & \star & q^*_{c}
		\end{array}\right] = {\renewcommand{\arraystretch}{1.3}
			\left[\begin{array}{c|c} \tilde{q}^* & \tilde{q}^*_\ell \\ \hline \star & \tilde{q}^*_c
			\end{array}\right]}.
	\end{equation}
	The optimal policy is again given by \eqref{optimal affine policy}.
\end{prop}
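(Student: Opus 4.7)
The plan is to mimic the augmentation strategy used in the proof of Proposition 1. Introducing the constant-one state $y^+ = y$, $y_0 = 1$, and working with $\tilde{x} = [x^\intercal, y]^\intercal$, the affine system \eqref{system} becomes the linear system \eqref{augmented system} with zero-mean noise $\tilde{\xi}$, and the stage-cost \eqref{augmented stage-cost} is genuinely quadratic in $(\tilde{x}, u)$. This converts the affine problem into a classical (stochastic) LQR, so I can invoke the well-known closed form of the $Q$-function for LQR: the fixed point of the $Q$-Bellman operator in augmented coordinates is
\begin{equation*}
\tilde{q}^{*}(\tilde{x},u) = \begin{bmatrix}\tilde{x}\\u\end{bmatrix}^{\intercal}\begin{bmatrix}\tilde{q}^{*} & \tilde{q}_{\ell}^{*}\\ \star & \tilde{q}_{c}^{*}\end{bmatrix}\begin{bmatrix}\tilde{x}\\u\end{bmatrix} + \tfrac{\gamma\,\textsc{Tr}(\tilde{P}^{*}\tilde{\Sigma})}{1-\gamma},
\end{equation*}
with the blocks given by \eqref{augmented Q matrix} and $\tilde{P}^{*}$ characterized by \eqref{augmented ARE}.

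Next I would expand $\tilde{x} = [x^{\intercal}, 1]^{\intercal}$ and rewrite $\tilde{q}^{*}(\tilde{x},u)$ in the original coordinates $(x,u)$. The column of $\tilde{q}^{*}$ associated with the $y$-component produces linear-in-$x$ terms (collected into $q_{x}^{*}$), the column of $\tilde{q}_{\ell}^{*}$ associated with the $y$-component produces linear-in-$u$ terms (collected into $q_{u}^{*}$), and the $y$-$y$ entries contribute a constant $Q_{c}^{*}$. The $\textsc{Tr}(\tilde{P}^{*}\tilde{\Sigma})$ term simplifies to $\textsc{Tr}(P^{*}\Sigma)$ because the last row/column of $\tilde{\Sigma}$ is zero. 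Reshuffling the rows and columns of the resulting coefficient matrix to the ordering $(x,u,1)$ then yields precisely the block identification in \eqref{optimal q function} and the display that follows, i.e.\ the $(1,1)$, $(1,2)$, $(2,2)$ blocks of $\tilde{q}^{*}$ split into $q^{*}$, $q_{x}^{*}$, $Q_{c}^{*}$ while $\tilde{q}_{\ell}^{*}$ splits into $q_{\ell}^{*}$ and $q_{u}^{*\intercal}$, and $\tilde{q}_{c}^{*} = q_{c}^{*}$.

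For the policy claim, I would minimize $q^{*}(x,u)$ in $u$ by setting the gradient to zero, obtaining $\pi^{*}(x) = -q_{c}^{*-1}(q_{\ell}^{*\intercal}x + q_{u}^{*})$. It then remains to verify that the vector $q_{u}^{*}$ extracted above coincides with the vector $q = L_{u} + \gamma B^{\intercal}(P_{\ell}^{*} + P^{*}(c+\mu))$ appearing in \eqref{optimal affine policy}; this is a direct read-off from the $(y,u)$-block of \eqref{augmented Q matrix} using $\tilde{B} = [B^{\intercal}, 0]^{\intercal}$ and the definitions of $\tilde{A}$, $\tilde{L}_{xu}$, $\tilde{P}^{*}$. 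Alternatively, one can simply invoke the policy $\tilde{\pi}^{*}(\tilde{x}) = -\tilde{q}_{c}^{*-1}\tilde{q}_{\ell}^{*\intercal}\tilde{x}$ from Proposition 1 and observe that evaluating at $y=1$ gives the same affine law.

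The only real obstacle is careful bookkeeping with the block structure: making sure the coordinate reordering from $(\tilde{x},u) = (x,1,u)$ to $(x,u,1)$ is done consistently so that the blocks of $\tilde{Q}^{*}$ map to $q^{*}$, $q_{\ell}^{*}$, $q_{x}^{*}$, $q_{u}^{*}$, $Q_{c}^{*}$, $q_{c}^{*}$ in the pattern stated. Once the augmentation is in place, every other step is either standard LQR theory or an algebraic identification already carried out implicitly in Proposition 1.
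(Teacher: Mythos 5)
Your proposal is correct and follows essentially the same route as the paper: the paper's (one-line) proof likewise reduces to the augmented dynamics \eqref{augmented system} and augmented stage-cost \eqref{augmented stage-cost} and verifies that \eqref{optimal q function} satisfies the fixed point equation \eqref{Bellman operator q functions}. Your additional bookkeeping (the block reordering from $(x,1,u)$ to $(x,u,1)$ and the check that $q_u^*$ equals $q = L_u + \gamma B^\intercal(P_\ell^* + P^*(c+\mu))$) simply makes explicit what the paper leaves to the reader.
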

\begin{proof}
	Similarly to Proposition \ref{Proposition V-LQR}, we can consider the augmented dynamics \eqref{augmented system} and augmented stage-cost \eqref{augmented stage-cost} and verify that \eqref{optimal q function} satisfies the fixed point equation \eqref{Bellman operator q functions}.
\end{proof}
Since the operator $\mathcal{F}$ shares the same monotonocity and contractivity properties of $\mathcal{T}$ \cite{BertsekasNDP}, we can write again a (nonlinear) exact program for the $Q$-function
\begin{equation}\label{NonlinearProgramQfunctions}
	\begin{aligned}
		\sup_{q\in\mathcal{S}(\mathbb{X} \times \mathbb{U})} & \int_{\mathbb{X} \times \mathbb{U}} q(x,u)c(dx,du) \\
		\mbox{s.t.} \;\;\;\; & q(x,u) \le (\mathcal{F}q)(x,u) \quad \forall (x,u) \in \mathbb{X} \times \mathbb{U},
	\end{aligned}
\end{equation}
where $c(\cdot, \cdot)$ takes the same role as in \eqref{NPvaluefunction}. Unlike \eqref{NPvaluefunction}, it is not straightforward to replace the nonlinear constraints in \eqref{NonlinearProgramQfunctions} with linear ones due to the nesting of the $\E$ and $\inf$ operators in \eqref{Bellman operator q functions}. A linear reformulation of \eqref{NonlinearProgramQfunctions} can be obtained, as shown in \cite{CogillDecentralizedADP} and \cite{PaulADP}, by introducing additional decision variables,
\begin{equation}\label{LPoldQfunctions}
	\begin{aligned}
		\sup_{\begin{smallmatrix}
				v\in \mathcal{S}(\mathbb{X}), \\ q\in\mathcal{S}(\mathbb{X} \times \mathbb{U})
		\end{smallmatrix}} & \int_{\mathbb{X} \times \mathbb{U}} q(x,u)c(dx,du) \\
		\mbox{s.t.} \;\;\;\; & q(x,u) \le (\mathcal{T}_\ell v)(x,u) \quad \forall (x,u) \in \mathbb{X} \times \mathbb{U} \\
		& v(x) \le q(x,u) \quad \forall (x,u) \in \mathbb{X} \times \mathbb{U}.
	\end{aligned}
\end{equation}
In the next section, we show how to reduce the number of decision variables by introducing a modified operator. 

\subsection{Relaxed Bellman Operator Formulation}

The authors in \cite{RelaxedBellmanOp} introduce the relaxed Bellman operator $\mathcal{\hat{F}} : \mathcal{S}(\mathbb{X} \times \mathbb{U}) \rightarrow \mathcal{S}(\mathbb{X} \times \mathbb{U})$,
\begin{equation}\label{relaxed Bellman op}
	(\hat{\mathcal{F}}\hat{q})(x,u) = \ell(x,u) + \gamma \inf_{w\in\mathbb{U}} \mathbb{E} [\hat{q}(f(x,u,\xi), w)],
\end{equation}
which retains the same structure of \eqref{Bellman operator q functions}, but with the infimum and expectation operators exchanged. The next result extends \cite[Theorem 3]{RelaxedBellmanOp} to affine dynamics and generalized quadratic stage-cost, showing that the fixed point of $\hat{\mathcal{F}}$ is again an upper estimator of the fixed point of $\mathcal{F}$ that preserves the minimizer with respect to $u$, \textit{i.e.} the optimal policy.
\begin{prop}
	The fixed point of \eqref{relaxed Bellman op} under dynamics \eqref{system}, stage-cost \eqref{stagecost} and $(\mathbb{X}, \mathbb{U}, \mathbb{D}) = (\mathbb{R}^n, \mathbb{R}^m, \mathbb{R}^n)$ is
	\begin{equation}\label{optimal q hat}
		\hat{q}(x,u) = q^{*}(x,u) + \tfrac{\gamma \textsc{Tr}(q^*_\ell q^{* -1}_c q^{*\intercal}_\ell \Sigma)}{1-\gamma},
	\end{equation} 
	and the optimal policy is again given by \eqref{optimal affine policy}.
\end{prop}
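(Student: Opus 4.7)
The plan is to reuse the augmentation strategy from the proofs of the preceding two propositions: lifting $\tilde{x}=\big[\begin{smallmatrix}x\\ 1\end{smallmatrix}\big]$ turns the affine system \eqref{system} together with the generalized quadratic cost \eqref{stagecost} into the zero-mean linear dynamics \eqref{augmented system} with the purely quadratic cost \eqref{augmented stage-cost}. Since $\hat{\mathcal{F}}$ in \eqref{relaxed Bellman op} sees only the dynamics and the stage-cost, this coordinate change places the affine problem squarely within the standard LQR setting treated in \cite[Thm.~3]{RelaxedBellmanOp}; the hypotheses of that theorem are already available, because stabilizability of $(\sqrt{\gamma}\tilde{A},\sqrt{\gamma}\tilde{B})$ was established in the proof of Proposition~\ref{Proposition V-LQR}.

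Applying \cite[Thm.~3]{RelaxedBellmanOp} to the augmented problem yields the fixed point
\begin{equation*}
\hat{q}(\tilde{x},u) = \tilde{q}^{*}(\tilde{x},u) + \tfrac{\gamma\, \textsc{Tr}\bigl(\tilde{q}^{*}_\ell\, \tilde{q}^{*-1}_c\, \tilde{q}^{*\intercal}_\ell\, \tilde{\Sigma}\bigr)}{1-\gamma},
\end{equation*}
where $\tilde{q}^{*}$ is the augmented $Q$-function from the previous proposition, which coincides with $q^{*}(x,u)$ along $y=1$. I would then revert to original coordinates using $\tilde{\Sigma} = \big[\begin{smallmatrix}\Sigma & 0 \\ 0 & 0\end{smallmatrix}\big]$: the trace picks out only the leading $n\times n$ block of $\tilde{q}^{*}_\ell \tilde{q}^{*-1}_c \tilde{q}^{*\intercal}_\ell$. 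The block decomposition displayed in the previous proposition, $\tilde{q}^{*}_\ell = \big[\begin{smallmatrix} q^{*}_\ell \\ q^{*\intercal}_u \end{smallmatrix}\big]$ together with $\tilde{q}^{*}_c = q^{*}_c$, identifies that leading block as $q^{*}_\ell q^{*-1}_c q^{*\intercal}_\ell$, which collapses the augmented expression into exactly \eqref{optimal q hat}.

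For the policy claim, the expression \eqref{optimal q hat} shows that $\hat{q}$ differs from $q^{*}$ only by a constant independent of $u$, so $\arg\min_u \hat{q}(x,u) = \arg\min_u q^{*}(x,u)$, which by the previous proposition equals the optimal affine policy \eqref{optimal affine policy}. The main (relatively minor) obstacle I anticipate is a careful verification that the augmented cost matrix $\big[\begin{smallmatrix}\tilde{L}_{xx} & \tilde{L}_{xu} \\ \star & L_{uu}\end{smallmatrix}\big]$ together with $\tilde{\Sigma}$ satisfies the positive semidefiniteness and detectability conditions under which \cite[Thm.~3]{RelaxedBellmanOp} is stated; however, since the uncontrollable mode $y^{+}=y$ of $\tilde{A}$ is rendered asymptotically stable by the discount $\gamma$, as already observed in Proposition~\ref{Proposition V-LQR}, this reduces to routine bookkeeping rather than a new analytical step.
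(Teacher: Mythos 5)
Your proposal is correct and follows essentially the same route as the paper: the paper's proof also leans on Theorem~3 of \cite{RelaxedBellmanOp} together with uniqueness of the fixed point of the monotone contraction $\hat{\mathcal{F}}$, with the augmentation machinery already set up in Proposition~\ref{Proposition V-LQR}. Your block-trace computation showing that $\textsc{Tr}(\tilde{q}^*_\ell \tilde{q}^{*-1}_c \tilde{q}^{*\intercal}_\ell \tilde{\Sigma}) = \textsc{Tr}(q^*_\ell q^{*-1}_c q^{*\intercal}_\ell \Sigma)$ is a correct and slightly more explicit rendering of the step the paper leaves implicit.
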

\begin{proof}
	Following the main steps of the proof of Theorem 3 in \cite{RelaxedBellmanOp}, one can verify that \eqref{optimal q hat} satisfies the fixed point equation $\hat{q} = \hat{\mathcal{F}}\hat{q}$ and, by uniqueness of the fixed point of $\hat{\mathcal{F}}$, conclude the proof.
\end{proof}
The relaxed operator $\hat{\mathcal{F}}$ shows significant computational advantage with respect to $\mathcal{F}$ when used in the LP formulation \cite{RelaxedBellmanOp}. In fact, since $\hat{\mathcal{F}}$ is also a monotone contraction mapping, its unique fixed point can be computed via a relaxation of \eqref{LPoldQfunctions} with reduced decision variables and constraints,
\begin{equation}\label{alternativeLPQfunctions}
	\begin{aligned}
		\sup_{q\in\mathcal{S}(\mathbb{X}\times\mathbb{U})} & \int_{\mathbb{X}\times\mathbb{U}} q(x,u)c(dx,du) \\
		\mbox{s.t.} \;\;\;\; & q(x,u) \le (\hat{\mathcal{F}}_{\ell}q)(x,u,w) \quad \forall (x,u,w) \in \mathbb{X}\times\mathbb{U}^2,
	\end{aligned}
\end{equation}
where $(\hat{\mathcal{F}}_{\ell} q)(x,u,w) = \ell(x,u) + \gamma \mathbb{E}\left[q(f(x,u,\xi),w)\right]$. The relaxed LP \eqref{alternativeLPQfunctions} and the fixed point characterization \eqref{optimal q hat} constitute the starting point for the next discussion on how to synthesize Bellman inequalities from data.

\section{Synthesis of Bellman Inequalities from Data}\label{sec:Bellman_ineq}

In the data-driven context, two fundamental problems in the LP formulation are the synthesis of Bellman inequalities from data (to avoid massive exploration of the state-space) and the estimation of expected values. In \cite{AndreaSynthesisBellmanIneq}, a preliminary study on linear systems in the value function formulation is conducted. Here, we generalize the analysis to affine systems in the relaxed Bellman operator formulation. Moreover, we provide novel estimators for the expectation in the constraints that do not require reinitialization and discuss how to employ them to build LPs that preserve the optimal policy.

Consider the family of generalized quadratic functions
\begin{equation*}\label{gen quad parametrization}
	\mathcal{Q} =\big\{ q(x,u)=\big[ \begin{smallmatrix} x \\ u \end{smallmatrix}\big]^{\intercal} Q \big[ \begin{smallmatrix} x \\ u \end{smallmatrix}\big] + 2\big[ \begin{smallmatrix} x \\ u \end{smallmatrix}\big]^{\intercal}Q_{\ell} + Q_c, \, Q=Q^\intercal \big\},
\end{equation*}
and note that $\mathcal{Q} \in \mathcal{S}(\mathbb{X} \times \mathbb{U})$. Then, define $m_{c}\in\mathbb{R}_{\ge0}$, $\mu_c\in\mathbb{R}^{n+m}$ and $\Sigma_c\succeq0$ as the zeroth, first and second raw moment (\textit{i.e.} centered about zero) of the measure $c(\cdot,\cdot)$.
\begin{prop}
	When $q(x,u) \in \mathcal{Q}$, the LP \eqref{alternativeLPQfunctions} takes the form 
\begin{equation}\label{linear formulation constraint set}
			\begin{aligned}
				\max_{\varphi} & \quad \begin{bsmallmatrix} (\hv\,\Sigma_c)^\intercal & 2\mu_c^\intercal & m_c
				\end{bsmallmatrix} \, \varphi \\
				\emph{s.t.} & \;\;\; \E\left[\rho(x,u,\omega,w)\right]^\intercal \varphi \le \ell(x,u),
			\end{aligned}
	\end{equation}
	for all $(x,u,w) \in \mathbb{X}\times\mathbb{U}^2$, where
	\begin{equation*}
		\theta = \begin{bsmallmatrix}
				\hv\big( \big[ \begin{smallmatrix} x \\ u \end{smallmatrix}\big] \big[ \begin{smallmatrix} x \\ u \end{smallmatrix}\big]^\intercal -\gamma \big[ \begin{smallmatrix} x^+ \\ w \end{smallmatrix}\big] \big[ \begin{smallmatrix} x^+ \\ w \end{smallmatrix}\big]^\intercal \big) \vspace{0.1cm}\\ 
				2 \big[ \begin{smallmatrix} x \\ u \end{smallmatrix}\big] - 2 \gamma \big[ \begin{smallmatrix} x^+ \\ w \end{smallmatrix}\big] \vspace{0.05cm}\\
				1-\gamma
			\end{bsmallmatrix}, \quad \varphi = \begin{bsmallmatrix} \hv\,Q \\ Q_{\ell} \\ Q_{c}
			\end{bsmallmatrix}.
	\end{equation*}
	%
\end{prop}
\begin{proof}
	Any quadratic form can be decomposed as
	\begin{equation}\label{quadratic form decomposition}
		\big[ \begin{smallmatrix} x \\ u \end{smallmatrix}\big]^{\intercal} Q \big[ \begin{smallmatrix} x \\ u \end{smallmatrix}\big] = \hv\big( \big[ \begin{smallmatrix} x \\ u \end{smallmatrix}\big] \big[ \begin{smallmatrix} x \\ u \end{smallmatrix}\big]^\intercal\big)^\intercal\, \hv\, Q .
	\end{equation}
	We obtain \eqref{linear formulation constraint set} by rearranging the constraints in \eqref{alternativeLPQfunctions} as $\E [ q(x,u) - \gamma q(x^+,w) ] \le \ell(x,u)$, imposing $q(x,u) \in \mathcal{Q}$ and \eqref{quadratic form decomposition} and performing the integration in the objective. 
\end{proof}
\begin{definition}[Dataset]
	When $X\in\mathbb{R}^{n\times d}$, $U\in \mathbb{R}^{m\times d}$ and $\Omega\in\mathbb{R}^{n\times d}$ are a collection of states, inputs and noise realizations and $X^+=AX+BU+c\mathds{1}^\intercal + \Psi$, we say that $(X,U,X^+)$ is a dataset of length $d$. A dataset corresponds to a single trajectory when $X_{i+1} = X^+_{i}$ for all $i=1,\ldots,d-1$, where $X_i$ denotes the $i$-th column of $X$. To specify a length different from $d$ we use the notation $X_{1:h} = \begin{bsmallmatrix} X_1 & \cdots & X_h \end{bsmallmatrix}$.
\end{definition}
In order to estimate the expected values in the Bellman inequalities \eqref{linear formulation constraint set}, as discussed \textit{e.g.} in \cite{RelaxedBellmanOp} and \cite{SutterCDC2017}, one could reinitialize the dynamics at a fixed state-input pair $(x,u)$ a number of times $d$, observe the corresponding transition $f(x,u,\Psi_i)$ and compute the unbiased estimator $\hat{\rho} = \frac{1}{d}\sum_{i=1}^{d}\rho(x,u,\Psi_i,w)$, such that $\E[\hat{\rho}] = \E[\rho(x,u,\omega,w)]$ and $\textsc{Var}( \hat{\rho} ) = \tfrac{1}{d}\textsc{Var}(\rho(x,u,\omega,w))$.
On the other hand, such an estimation can only be performed if one can reinitialize the dynamics at the same state $x$ and play the same input $u$ multiple times. Since this might be unrealistic in a stochastic framework, we show the effect of removing the reinitialization assumption by averaging the observations over the data directly instead of over the vectors $\rho(x,u,\Omega_i,w)$.
\begin{lemma}\label{lemma dataset}
	Consider a dataset $(X,U,X^+)$ of length $d$ and a matrix $W\in\mathbb{R}^{m \times d}$ such that 
	\begin{equation}\label{rank condition}
		\textsc{Rank} \left[\begin{smallmatrix}
			X \\ U \\ \mathds{1}^\intercal \\ W
		\end{smallmatrix} \right] = n +2m +1.
	\end{equation} 
	Then, $\forall (x,u,w) \in \mathbb{X}\times\mathbb{U}^2$, there exists $\alpha \in \mathbb{R}^{d}$ satisfying
	\begin{equation}\label{system of linear equations}
			\left[\begin{smallmatrix}
				X \\ U \\ \mathds{1}^\intercal \\ W 
			\end{smallmatrix}\right] \alpha = \left[\begin{smallmatrix}
				x \\ u \\ 1 \\ w
			\end{smallmatrix}\right],
	\end{equation} 
	and an estimator $\bar{\rho} = \rho\left( X\alpha, U\alpha, \Omega \alpha, W\alpha \right)$ such that
	\begin{enumerate}
		\item[(i)]\label{bellman ineq reconstr}
		$\bar{\rho} = \rho(x,u,\Omega \alpha,w)$,
		
		\item[(ii)] $\bar{\rho}$ has mean $\E \left[ \bar{\rho} \right] = \E \left[\rho(x,u,\bar{\omega},w)\right]$ and covariance $\textsc{Var}\left( \bar{\rho} \right) = \|\alpha\|^2_2 \textsc{Var}\left(\rho(x,u,\omega,w)\right)$, where $\bar{\omega}$ is a random vector with mean $\E[\bar{\omega}] = \mu$ and covariance $\bar{\Sigma} = \|\alpha\|^2_2\Sigma$.  
	\end{enumerate}
\end{lemma}
\begin{proof}
	(i) First, note that the rank-condition \eqref{rank condition} implies that \eqref{system of linear equations} always has a solution. Then, we have $X^+ \alpha = (AX+BU+c\mathds{1}^\intercal+\Omega)\alpha = Ax+Bu+c+\Omega\alpha$. Finally, the result holds by substituting \eqref{system of linear equations} and $X^+\alpha$ into the definition of $\rho(x,u,\omega,w)$. 
	
	(ii) Let us define $\psi_i$, $i=1,\ldots,d$ as independent random vectors with mean $\mu$ and covariance $\Sigma$. Then, $\bar{\psi} = \begin{bsmallmatrix} \psi_1 & \cdots & \psi_d \end{bsmallmatrix} \alpha$ is also a random vector. In particular, its mean is $\E[\bar{\psi}] = \mathds{1}^\intercal \alpha \mu = \mu$ and, since $\textsc{Cov}(\psi_i,\psi_j) = 0$ for all $i \ne j$ due to independence, its covariance is $\textsc{Var}(\bar{\psi})= \| \alpha \|^2_2 \Sigma = \bar{\Sigma}$. The claim then follows by considering (i).
\end{proof}
Note that if the underlying dynamics is deterministic (\textit{i.e.} $\Omega=0$), the estimator reduces to $\bar{\rho} = \rho(x,u,0,w)$. Then, if \eqref{rank condition} holds, due to the lack of expectation in \eqref{linear formulation constraint set} one can potentially reconstruct all infinite constraints by taking linear combination of the data, similarly to the discussion in \cite{AndreaSynthesisBellmanIneq} for linear systems in the value function formulation. 
%
%
%

In general, $\bar{\rho}$ is an unbiased estimator of $\E \left[\rho(x,u,\bar{\omega},w)\right]$, instead of $\E\left[\rho( x, u, \omega, w)\right]$. The former is the coefficient associated to the Bellman inequalities of the dynamical system $x^+=Ax+Bu+c+\bar{\omega}$. By inspecting \eqref{optimal affine policy}, we note that the optimal policy $\pi^*(x)$ depends on $\E[\omega]$ but not on $\Sigma$; the latter appears only in the constant terms of $v^*(x)$, $q^*(x,u)$ and $\hat{q}(x,u)$ (see \eqref{v star}, \eqref{optimal q function} and \eqref{optimal q hat}). 
Hence, for $(\mathbb{X}, \mathbb{U}, \mathbb{D}) = (\mathbb{R}^n, \mathbb{R}^m, \mathbb{R}^n)$, the solution to \eqref{linear formulation constraint set} associated to $\bar{\psi}$ is
\begin{equation*}
	\bar{q}(x,u) = \hat{q}(x,u) + \tfrac{\gamma \textsc{Tr}((\| \alpha \|^2_2 -1)q^*\Sigma)}{1-\gamma},
\end{equation*} and the associated optimal policy remains \eqref{optimal affine policy}. 

In summary, the estimator $\bar{\rho}$ can be computed from system's trajectories, it does not require reinitialization and can be used to construct LPs with biased constraints that preserve the optimal policy. Note that, to implement the approximation described above, one has to rely on estimators with the same covariance, \textit{i.e.} same $\| \alpha \|^2_2$. The study of statistical bounds due to constraint approximation is deferred to future research, while the interested reader is referred to \cite{deFariasConstraintSampling} and \cite{EsfahaniFromInftoFinitePrograms} for a discussion on error bounds due to constraint sampling and randomized optimization in the LP framework. Finally, please refer to \cite{RelaxedBellmanOp} for a description on how to implement the LPs described above in a model-free fashion and on the observed control performance.

\section{Willems' Fundamental Lemma for Affine Systems}\label{sec:WillemFL}

The previous discussion on how to synthesize Bellman inequalities from data is based on the rank assumption \eqref{rank condition}. While the importance of \eqref{rank condition} has been recognised in the affine systems literature \cite{BerberichFundLemmaAffineSystems}, conditions on the inputs applied to the system that ensure \eqref{rank condition} have proved elusive.  

In \cite{WillemsPersistence}, J.C. Willems and co-authors discuss how the information contained in a sufficiently rich and long trajectory of a \textit{linear} system is enough to describe any other trajectory of appropriate length that can be generated by the system. This result is known as Willems' fundamental lemma. The sufficient conditions such that the generated trajectory is rich enough are that (i) the system is controllable, and (ii) the input sequence is \textit{persistently exciting} of sufficient order. A sequence $S_1,\ldots,S_d \in \mathbb{R}^m$, $S = \begin{bmatrix} S_1 & \cdots & S_d \end{bmatrix}$, is persistently exciting of order $K\in\mathbb{N}_{>0}$ if the associated Hankel matrix of depth $K$,
\begin{equation*}
	\mathcal{H}_K(S) = \begin{bsmallmatrix} S_1 & S_2 & \cdots & S_{d-K+1} \\
		S_2 & S_3 & \cdots & S_{d-K+2} \\
		\vdots & \vdots &  & \vdots \\
		S_{K} & S_{K+1} & \cdots & S_d
	\end{bsmallmatrix} \in \mathbb{R}^{mK \times (d-K+1)},
\end{equation*}  
is full row-rank, \textit{i.e.} $\textsc{Rank}(\mathcal{H}_K(S)) = mK$. A necessary condition for this is $d \ge (m+1)K -1$.

\begin{prop}\label{prop:pe}
	If $S$ is persistently exciting of order $K$, then for all $K'<K$ it holds that (i) $S$ is persistently exciting of order $K'$ and (ii)  $\mathds{1}^\intercal \notin \textsc{Rowsp}\,\mathcal{H}_{K'}(S)$.
\end{prop}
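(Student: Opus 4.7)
The plan is to exploit the fact that $\mathcal{H}_{K'}(S)$ contains, as a column submatrix, the first $mK'$ rows of $\mathcal{H}_K(S)$ restricted to its first $d-K+1$ columns: each such column has the form $(S_i^\intercal,\ldots,S_{i+K'-1}^\intercal)^\intercal$ for $i=1,\ldots,d-K+1$, which are precisely the first $d-K+1$ columns of $\mathcal{H}_{K'}(S)$. Since $\mathcal{H}_K(S)$ has rank $mK$ by hypothesis, any subset of its rows is linearly independent, so the first $mK'$ rows retain rank $mK'$ even after restricting to a column subset. Hence $\mathcal{H}_{K'}(S)$ has rank at least $mK'$ and is therefore full row-rank.

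\textbf{Part (ii).} The strategy is to argue by contradiction, converting a putative equality at level $K'$ into a rank deficiency at level $K'+1$, which part (i) will rule out since $K'+1 \le K$. Suppose $\mathds{1}^\intercal = v^\intercal \mathcal{H}_{K'}(S)$ with $v = (v_0^\intercal,\ldots,v_{K'-1}^\intercal)^\intercal \in \mathbb{R}^{mK'}$ and $v_j \in \mathbb{R}^m$. Introduce the two padded shifts $u = \begin{bsmallmatrix} v \\ 0 \end{bsmallmatrix}$ and $w = \begin{bsmallmatrix} 0 \\ v \end{bsmallmatrix}$ in $\mathbb{R}^{m(K'+1)}$. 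A block-by-block expansion shows that both $u^\intercal \mathcal{H}_{K'+1}(S)$ and $w^\intercal \mathcal{H}_{K'+1}(S)$ evaluate the same linear combination $\sum_{j=0}^{K'-1} v_j^\intercal S_{\cdot + j}$ on length-$K'$ windows of $S$, yielding $\mathds{1}^\intercal \in \mathbb{R}^{d-K'}$ in both cases (by the assumption on $v$, since the starting indices $1,\ldots,d-K'$ and $2,\ldots,d-K'+1$ lie in the admissible range $1,\ldots,d-K'+1$). Hence $u - w$ lies in the left null space of $\mathcal{H}_{K'+1}(S)$.

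\textbf{Closing and anticipated obstacle.} To conclude, I have to verify $u - w \neq 0$: equality in the first block would force $v_0 = 0$, and cascading this through successive blocks yields $v_j = v_{j-1}$ for all $j$, hence $v = 0$, contradicting $v^\intercal \mathcal{H}_{K'}(S) = \mathds{1}^\intercal \neq 0$. Applying part (i) at order $K'+1 \le K$ then contradicts the existence of a nonzero left null vector for $\mathcal{H}_{K'+1}(S)$. The only real obstacle is the bookkeeping of the shift argument and the index ranges; one should also note in passing that $\mathcal{H}_{K'+1}(S)$ has at least one column, which follows from the necessary condition $d \ge (m+1)K - 1 \ge K'+1$ implied by persistency of excitation of order $K$.
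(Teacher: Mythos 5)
Your proof is correct and follows essentially the same route as the paper: part (i) via the observation that the first $mK'$ rows of $\mathcal{H}_K(S)$ form (the first $d-K+1$ columns of) $\mathcal{H}_{K'}(S)$, and part (ii) by zero-padding the coefficient vector in two shifted positions to produce a left-kernel element of a deeper Hankel matrix. The only (harmless) deviations are that you pad to depth $K'+1$ and invoke part (i), whereas the paper pads directly to depth $K$, and that you explicitly verify that the difference $u-w$ of the two shifted vectors is nonzero --- a detail the paper's proof leaves implicit.
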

\begin{proof}
	(i) The rows of $\mathcal{H}_K(S)$ are linearly independent, therefore $\mathcal{H}_{K'}(S)_{1:d-K+1}$ is full row-rank and so is $\mathcal{H}_{K'}(S)$.
	
	(ii) Assume $\mathds{1}^\intercal \in \textsc{Rowsp}\,\mathcal{H}_{K'}(S)$ by contradiction. Then, there exists $\beta \in \mathbb{R}^{mK'}$ so that $\beta^\intercal \mathcal{H}_{K'}(S) = \mathds{1}^\intercal$. Given the Hankel structure, it holds $\begin{bsmallmatrix}
		\beta \\ 0 \end{bsmallmatrix}^\intercal \mathcal{H}_{K}(S) = \begin{bsmallmatrix}
		0 \\ \beta \end{bsmallmatrix}^\intercal \mathcal{H}_{K}(S) = \mathds{1}^\intercal$ and hence $\mathcal{H}_{K}(S)$ is not full row-rank.
\end{proof}

The authors in \cite{BerberichFundLemmaAffineSystems} discuss how, in the affine case, the additional constraint $\mathds{1}^\intercal \alpha = 1$ is necessary to ensure that system trajectories can be expressed as a linear combination of the collected data. In light of Lemma \ref{lemma dataset}, it is evident that this constraint allows one to average out the affine term when taking linear combinations of the trajectories. To extend the fundamental lemma to affine systems, it remains to derive sufficient conditions such that \eqref{rank condition} is guaranteed. In the following, we consider an affine output transformation $y = Cx + Du + r$, with $C\in \mathbb{R}^{z\times n}$, $D\in\mathbb{R}^{z \times m}$ and $r\in\mathbb{R}^{z}$, and deterministic dynamics (\textit{i.e.} $\Omega=0$).

\begin{thm}[Fundamental lemma for affine systems]\label{fundamental lemma for affine sys} 
	Consider a dataset $(X,U,X^+,Y)$ corresponding to a single trajectory of length $d$ with $X^+=AX+BU+c\mathds{1}^\intercal$ and $Y = CX+DU+r\mathds{1}^\intercal$. If $U$ is a persistently exciting input of order $n+L+1$ and $(A,B)$ is a controllable pair, then
	\begin{enumerate}
		\item[(i)] $\textsc{Rank}\begin{bsmallmatrix} \mathcal{H}_{1}(X_{1:d-L+1}) \\ \mathcal{H}_{L}(U) \\ \mathds{1}^\intercal
		\end{bsmallmatrix} = n+mL+1$, \vspace{0.15cm}
		\item[(ii)] $(\tilde{X},\tilde{U},\tilde{X}^+,\tilde{Y})$ is a dataset of length $L$ if and only if there exists $g \in \mathbb{R}^{d-L+1}$ such that
		\begin{equation*}
			\begin{bsmallmatrix}
				\textsc{Vec} \, \tilde{U} \\ \textsc{Vec} \, \tilde{Y} \\ 1
			\end{bsmallmatrix} = \begin{bsmallmatrix}
				\mathcal{H}_L(U) \\ \mathcal{H}_L(Y) \\ \mathds{1}^\intercal
			\end{bsmallmatrix} g.
		\end{equation*}
	\end{enumerate}
	
\end{thm}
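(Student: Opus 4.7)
The plan is to prove Part (i) by contradiction via a state-space argument in the spirit of \cite{HenkMultipleDatasets}, tracking the affine term through the extra row $\mathds{1}^\intercal$; Part (ii) then follows from Part (i) by a direct linearity argument, with $\mathds{1}^\intercal g = 1$ being precisely what propagates the affine offsets $c$ and $r$ through linear combinations of trajectory columns.

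For Part (i), I would suppose for contradiction that some nontrivial $(\xi,\zeta,\tau)$ satisfies
\[
\xi^\intercal X_k + \zeta^\intercal U_{k:k+L-1} + \tau = 0, \qquad k=1,\ldots,d-L+1.
\]
The easy subcase $\xi=0$ is immediate from Proposition \ref{prop:pe}(i)-(ii) applied to $\mathcal{H}_L(U)$ and $\mathds{1}^\intercal$, both valid since $L<n+L+1$. In the harder subcase $\xi\neq 0$, I would iterate the one-step identity $\xi^\intercal X_{k+1} = (A^\intercal \xi)^\intercal X_k + \xi^\intercal B\, U_k + \xi^\intercal c$ to produce, for each $j=0,\ldots,n$, a relation of the same form as the original but with state coefficient $(A^j)^\intercal \xi$ and input window widened to $L+j$.

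Combining these $n+1$ relations with the coefficients of the characteristic polynomial of $A$ cancels every $X_k$-term by Cayley--Hamilton, yielding a single row relation $\bar{\zeta}^\intercal\,\mathcal{H}_{L+n}(U) + \bar{\tau}\,\mathds{1}^\intercal = 0$. A second application of Proposition \ref{prop:pe}, now at depth $L+n<n+L+1$, forces $\bar{\zeta}=0$ and $\bar{\tau}=0$. Reading off the blocks of $\bar{\zeta}$ from right to left and exploiting the triangular structure built up by the iteration, the first $L$ block-equations back-solve to $\zeta_L=\zeta_{L-1}=\cdots=\zeta_1=0$; once $\zeta=0$, the remaining $n$ blocks collapse to $B^\intercal(A^\intercal)^j\xi=0$ for $j=0,\ldots,n-1$. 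Controllability of $(A,B)$, equivalently observability of $(A^\intercal,B^\intercal)$, then forces $\xi=0$, and $\tau=0$ follows from the original relation, closing the contradiction.

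For Part (ii), the $(\Leftarrow)$ direction is constructive: given $g$ with $\mathds{1}^\intercal g=1$, set $\tilde{X}$ (respectively $\tilde{X}^+$) equal to $\mathcal{H}_L(X)g$ (respectively $\mathcal{H}_L(X^+)g$) reshaped columnwise; linearity combined with $c\,\mathds{1}^\intercal g=c$ and $r\,\mathds{1}^\intercal g=r$ immediately gives $\tilde{X}^+=A\tilde{X}+B\tilde{U}+c\mathds{1}^\intercal$ and $\tilde{Y}=C\tilde{X}+D\tilde{U}+r\mathds{1}^\intercal$. For $(\Rightarrow)$, Part (i) supplies the full row rank needed to solve $[X_{1:d-L+1};\mathcal{H}_L(U);\mathds{1}^\intercal]\,g=[\tilde{X}_1;\textsc{Vec}\,\tilde{U};1]$ for some $g$; a short induction on the time index, again powered by $\mathds{1}^\intercal g=1$, then matches $\mathcal{H}_L(Y)\,g$ with $\textsc{Vec}\,\tilde{Y}$. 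The main obstacle is the bookkeeping inside Part (i): keeping the iterated coefficients organized so that the back-solving cleanly separates into a phase that kills $\zeta$ and a subsequent phase that invokes controllability to kill $\xi$, while correctly justifying the two applications of Proposition \ref{prop:pe} at different Hankel depths.
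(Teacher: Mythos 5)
Your part (i) is essentially the paper's own proof: propagate the left-kernel relation through the dynamics, cancel the state terms via Cayley--Hamilton, apply Proposition \ref{prop:pe} to conclude that the depth-$(n+L)$ input Hankel matrix stacked with $\mathds{1}^\intercal$ has trivial left kernel (killing $\bar{\zeta}$ and $\bar{\tau}$), back-substitute to get $\xi^\intercal A^jB=0$, and invoke controllability --- and the block bookkeeping you flag does work out exactly as you describe. For part (ii) the paper does not give its own argument but cites \cite{BerberichFundLemmaAffineSystems} for (i) $\Rightarrow$ (ii), so your sketch there is additional but consistent with the intended route.
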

\begin{proof}
	The fact that (i) $\Rightarrow$ (ii) was proven in \cite[Theorem 1]{BerberichFundLemmaAffineSystems}. In the following, inspired by the state-space proof of the fundamental lemma for linear systems in \cite{HenkMultipleDatasets}, we prove that controllability and persistency of excitation of higher order imply (i). Since $U$ is persistently exciting of order $n+L+1$, by Proposition \ref{prop:pe} it is also persistently exciting of order $n+L$ and its associated Hankel matrix $\mathcal{H}_{n+L}(U) \in \mathbb{R}^{m(n+L) \times (d-n-L+1)}$ is full row-rank; in particular, $d \ge (m+1)(n+L)-1$. For compactness, let us denote the number of columns of $\mathcal{H}_{n+L}(U)$ as $h = d-n-L+1$. To establish this claim, we consider the row vectors $\nu \in \mathbb{R}^{1 \times n}$, $\eta \in \mathbb{R}^{1 \times mL}$ and $\epsilon \in \mathbb{R}$ such that $\left[\, \nu \;\; \eta \;\; \epsilon \, \right]$ is a vector in the left kernel of $\begin{bsmallmatrix}
		\mathcal{H}_{1}(X_{1:d-L+1})^\intercal & \mathcal{H}_L(U)^\intercal & \mathds{1} \end{bsmallmatrix}^\intercal$ and show that $\left[\, \nu \;\; \eta \;\; \epsilon \, \right] = 0$. We can write the first $h$ scalar equations in $\left[\, \nu \;\; \eta \;\; \epsilon \, \right] \begin{bsmallmatrix}
		\mathcal{H}_{1}(X_{1:d-L+1})^\intercal & \mathcal{H}_L(U)^\intercal & \mathds{1} \end{bsmallmatrix}^\intercal=0$ as $\nu X_i + \eta \textsc{Vec}\,U_{i:i+L} + \epsilon = 0$ for $i=1,\ldots,h$. For each scalar equation $i$, we can derive $n$ additional equations by repeatedly applying the laws of the system $X_{i+1} = AX_{i} + BU_{i} + c$, obtaining $(n+1)h$ scalar equations that we can represent in matrix form as 
	\[ \left[\begin{matrix} \Lambda \end{matrix}\middle|\begin{matrix} \lambda
	\end{matrix}\right]\Theta = \bm{0}_{(n+1) \times h}, \] 
	with $\left[\begin{matrix} \Lambda \end{matrix}\middle|\begin{matrix} \lambda
	\end{matrix}\right] \in \mathbb{R}^{(n+1) \times (m+1)(n+1)}$, $\Theta \in \mathbb{R}^{(m+1)(n+1) \times h}$,
	\begin{equation*}
		\left[\begin{matrix} \Lambda \end{matrix}\middle|\begin{matrix} \lambda
		\end{matrix}\right] =  \left[\begin{smallmatrix}
			\nu & \eta & & & & \\
			\nu A & \nu B & \eta & & & \\
			\nu A^2 & \nu AB & \nu B & \eta & & \\
			\vdots & \vdots & & \ddots & \ddots & \\
			\nu A^n & \nu A^{n-1} B & \cdots & & \nu B & \eta 
		\end{smallmatrix}\middle|\begin{smallmatrix} \epsilon \\ \nu c + \epsilon \\ \nu(I+A)c + \epsilon \\ \vdots\\ \nu (\sum_{i=0}^{n-1}A^i) c + \epsilon
		\end{smallmatrix}\right],
	\end{equation*}
	\begin{equation*}
		\Theta =  \begin{bsmallmatrix} X_1 & X_2 & \cdots & X_{h} \\ U_1 & U_2 & \cdots & U_{h} \\
			U_2 & U_3 & \cdots & U_{h+1} \\
			\vdots & \vdots & \ddots & \vdots \\
			U_{n+L} & U_{n+L+1} & \cdots & U_d \\ 1 & 1 & \cdots & 1
		\end{bsmallmatrix} = \begin{bsmallmatrix} \mathcal{H}_1(X_{1:h}) \\ \mathcal{H}_{n+L}(U) \\ \mathds{1}^\intercal
		\end{bsmallmatrix}.
	\end{equation*}
	Now, any linear combination of the rows of $\left[\begin{matrix} \Lambda \end{matrix}\middle|\begin{matrix} \lambda
	\end{matrix}\right]$ also belongs to the left kernel of $\Theta$. In particular, if we select $\beta = \big[ \beta_0 \; \cdots \; \beta_n \big]^{\intercal} \in \mathbb{R}^{n+1}$, $\beta_n = 1$, to contain the coefficients of the characteristic polynomial of $A$, then by Cayley-Hamilton theorem, $\sum_{i=0}^{n} \beta_i A^i = 0$. Then, 
	\begin{equation*}
		\beta^\intercal \left[\begin{matrix} \Lambda \end{matrix}\middle|\begin{matrix} \lambda
		\end{matrix}\right] = \begin{bsmallmatrix} 0 \; & \; \beta_0 \eta+\nu\sum\limits_{i=1}^{n}\beta_iA^{i-1}B \; & \; \cdots & \; \beta_{n-1}\eta+\nu B \; & \; \eta \; & \; \beta^\intercal \lambda
		\end{bsmallmatrix},
	\end{equation*}
	and therefore,
	\begin{equation*}
		\begin{bsmallmatrix}\beta_0 \eta+\nu\sum\limits_{i=1}^{n}\beta_iA^{i-1}B \; & \; \cdots \; & \; \beta_{n-1}\eta+\nu B \; & \; \eta \; & \; \beta^\intercal \lambda
		\end{bsmallmatrix} \begin{bsmallmatrix} \mathcal{H}_{n+L}(U) \\ \mathds{1}^\intercal \end{bsmallmatrix} = 0.
	\end{equation*}
	Note that, since $U$ is persistently exciting of order $n+L+1$, by Proposition \ref{prop:pe} it holds $\mathds{1}^\intercal \notin \textsc{Rowsp}\,\mathcal{H}_{n+L}(U)$ and the dimension of the left kernel of $\begin{bsmallmatrix} \mathcal{H}_{n+L}(U) \\ \mathds{1}^\intercal \end{bsmallmatrix}$ is zero. Therefore, we can conclude that $\eta = 0$. Then, $\nu B = 0$, and we can substitute it in $\nu( \beta_{n-1}B + AB) = 0$ to obtain $\nu AB = 0$. By completing the substitution chain, we get $\nu \begin{bmatrix} B & AB & \cdots & A^{n-1}B
	\end{bmatrix} = 0$ and, by controllability of $(A,B)$, we can conclude that $\nu=0$. Then, by observing that the first row of $\left[\begin{matrix} \Lambda \end{matrix}\middle|\begin{matrix} \lambda
	\end{matrix}\right]$ belongs to the left kernel of $\Theta$, we must conclude that also $\epsilon = 0$. Finally, $\begin{bmatrix} \nu & \eta & \epsilon
\end{bmatrix} = 0$ and (i) is satisfied.
\end{proof}
\begin{cor}
	Under the same assumptions of Theorem~\ref{fundamental lemma for affine sys}, if $L=1$ then $\textsc{Rank}\begin{bsmallmatrix} X^\intercal & U^\intercal & \mathds{1}
	\end{bsmallmatrix}^\intercal = n+m+1$. If, additionally, $d \ge n+2m+1$, then condition \eqref{rank condition} can always be satisfied by an appropriate choice of $W$.
\end{cor}
It follows from Theorem \ref{fundamental lemma for affine sys} that, similarly to the linear setting and even though the augmented pair $(\tilde{A},\tilde{B})$ is uncontrollable, persistency of excitation and controllability of $(A,B)$ are sufficient conditions to obtain a dataset with enough information so that (Theorem \ref{fundamental lemma for affine sys}(i)) the rank condition on the data matrix is satisfied, and (Theorem \ref{fundamental lemma for affine sys}(ii)) $L$-long trajectories are representable as linear combinations of the input-output data. The notable difference is that in case of affine systems we need excitability of one order higher to guarantee that $\mathds{1}^\intercal \alpha = 1$ is satisfied. 

Theorem \ref{fundamental lemma for affine sys}(i) provides an additional insight: the trajectory of a controllable affine system with a persistently exciting input of order $n+2$ can not be contained in any affine subspace of $\mathbb{R}^n$ orthogonal to a unit vector (see Fig. \ref{Fig.affinesubspace}). The same consideration is valid for linear systems by considering Theorem \ref{fundamental lemma for affine sys}(i) with $c=0$.

Lastly, we comment on the use of $W$, which originates from the relaxation of the constraints in \eqref{alternativeLPQfunctions}. Its design is independent from the collected data and, as mentioned in \cite{GoranADP} in deterministic settings, it can be used offline to generate new constraints associated with the same $(x,u)$ pairs but different $w$. For the first time, in the present paper, we provide a design condition on $W$ via \eqref{rank condition} and establish that it must be selected to be independent from the observed state-input trajectories.

\begin{figure}
	\vspace{0.2cm}
	\includegraphics[width=0.95\columnwidth]{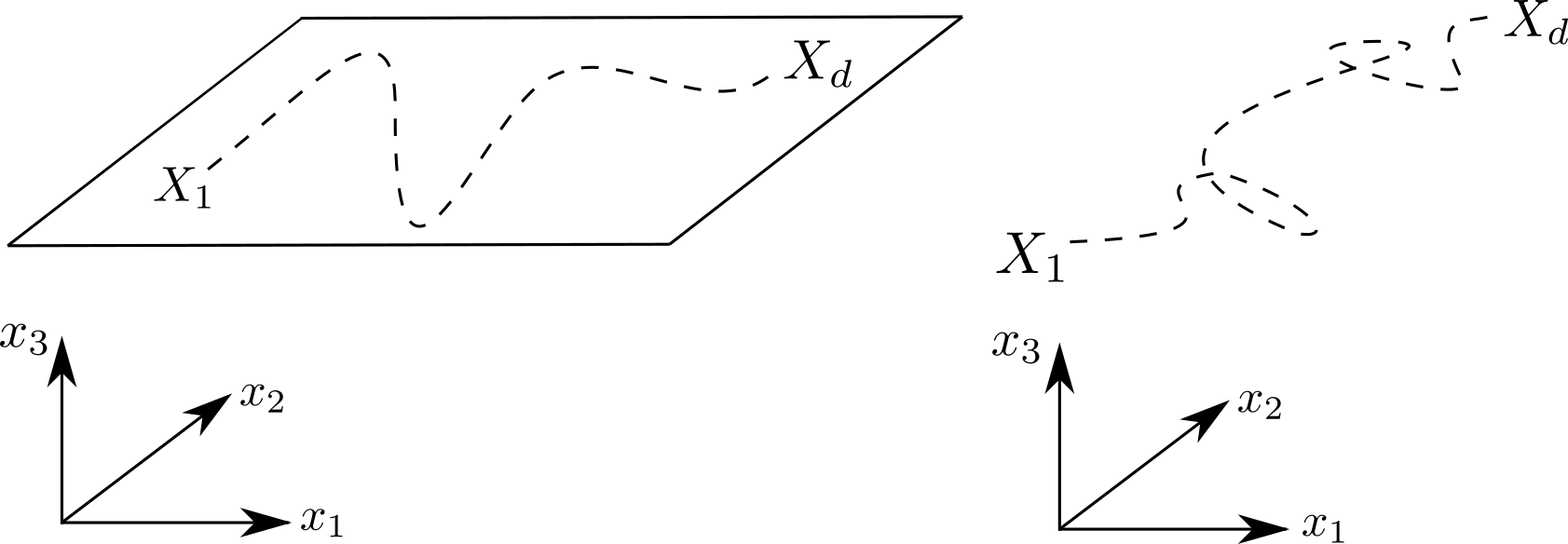}
	\caption{Two example trajectories in $\mathbb{R}^3$ for a controllable system. The one on the left is contained in an affine subspace orthogonal to a unit vector: the associated input sequence is not persistently exciting of order $n+2 = 5$ or higher.}
	\label{Fig.affinesubspace}
\end{figure}

\section{Conclusion}

The present letter focuses on optimal control for affine systems via data-driven linear programming. After introducing the fixed point characterization of three fundamental operators, we show how to synthesize the Bellman inequalities in the LP formulations from data and provide estimators for the associated expected values that preserve the optimal policy. To provide sufficient conditions for the mentioned results, we complete the proof of Willems' fundamental lemma for affine systems. Future research directions will include relaxation of the sufficient conditions in the spirit of \cite{YuExtensionWillems} and online experiment design \cite{HankOnlineExpDesign}.

\bibliographystyle{plain}
\bibliography{Bibliography}

\end{document}